\def\@seccntformat#1{\csname the#1\endcsname.\hspace*{0.5em}}
\gdef\th@plain{\normalfont\itshape
  \def\@begintheorem##1##2{%
        \item[\hskip\labelsep \theorem@headerfont ##1\ ##2.]}%
\def\@opargbegintheorem##1##2##3{%
   \item[\hskip\labelsep \theorem@headerfont ##1\ ##2\ (##3).]}}
\newcommand{\EKhref}[2]{URL: \href{#1}{\nolinkurl{#2}}}
\newcommand{\comment}[1]{}
\newcommand{\bz}{{\mathbf 0}}
\newcommand{\ZZ}{{\mathbb Z}}
\newcommand{\RR}{{\mathbb R}}
\newcommand{\ff}{{\mathbf f}}
\newcommand{\xx}{{\mathbf x}}
\newcommand{\yy}{{\mathbf y}}
\newcommand{\vv}{{\mathbf v}}
\newtheorem{theorem}{Theorem}
\newtheorem{lemma}{Lemma}
\newtheorem{corollary}{Corollary}
\newtheorem{definition}{Definition}
\newtheorem{problem}{Problem}
{\theorembodyfont{\upshape} } 
{\theorembodyfont{\upshape} \newtheorem{example}{Example}} 
\def\@yproof[#1]{\@proof{ #1}}
\def\@proof#1{\begin{trivlist}\item[]{\em Proof#1.}}
\newenvironment{proof}{\@ifnextchar[{\@yproof}{\@proof{}
}}{~$\Box$\end{trivlist}}
\title{%
Domain-of-Attraction Estimation for Uncertain Non-polynomial Systems\raisebox{0ex}{*}%
}
\chardef\ttlb="7B 
\chardef\ttrb="7D 
\chardef\ttti="7E 
\author{%
Min Wu$^a$, Zhengfeng Yang$^a$ and Wang Lin$^{b}$
\\
\\
\small\llap{$^a$}
Shanghai Key Laboratory of Trustworthy Computing\\
[-0.2ex] \small East China Normal University, Shanghai 200062, China
\\[-0.2ex]
\small\llap{$^b$}
College of Mathematics and Information Science \\
[-0.2ex] \small Wenzhou University, Zhejiang 325035, China
\\[-0.2ex]
\small {\ttfamily \{mwu,zfyang\}@sei.ecnu.edu.cn;
linwang@wzu.edu.cn}}
\begin{document}

\date{}
\maketitle
\def\thefootnote{\fnsymbol{footnote}}
\footnotetext[1]{%
\footnotesize
This material is supported in part by the National Natural Science
Foundation of China under Grants 91118007,61021004(Wu,Yang), and the
Fundamental Research Funds for the Central Universities under Grant
78210043(Wu,Yang). }

\begin{abstract}
In this paper, we consider the problem of
computing estimates of the domain-of-attraction for non-polynomial
systems. A
polynomial approximation technique, based on multivariate polynomial
interpolation and error analysis for remaining functions, 
 is applied to compute an uncertain
polynomial system,
whose set of trajectories contains that of the original
non-polynomial system.
Experiments on the benchmark non-polynomial systems show that our
approach gives better estimates of the
domain-of-attraction.
\end{abstract}

\section{Introduction}

Stability for nonlinear control systems plays an important role in control system analysis and design. It will be very useful to
know the
domain of attraction (DOA) of an equilibrium point,
 however, this
region is usually difficult to find and represent explicitly.
Therefore, looking for
underestimates of the DOA with simple shapes has been a fundamental
issue in control system analysis since a long time. Among all the
methods, those based on Lyapunov functions are dominant in
literature \cite{chesi2005domain, chesi2009estimating, chesi2005lmi,
CT1989, cruck2001estimation, jarvis2003lyapunov,
prakash2002obtaining, tan2006stability, topcu2010robust,
tibken2000estimation, hachicho2002estimating, tibken2002computation,
prajna2004nonlinear}. These methods
not only yield a Lyapunov function
as a stability certificate,
but also the corresponding sublevel sets
as
estimates of the DOA.

For polynomial systems, many well-established techniques
(\cite{chesi2005lmi, cruck2001estimation, CT1989,
jarvis2003lyapunov, prakash2002obtaining, tan2006stability,
topcu2010robust, tibken2000estimation, hachicho2002estimating,
tibken2002computation, prajna2004nonlinear}) are available for
computing estimates of DOAs.
In \cite{tan2006stability}, a method based on SOS
decomposition was presented to find provable
DOAs and attractive invariant sets for nonlinear
polynomial systems.
 For odd polynomial systems, \cite{chesi2005lmi} employed an LMI-based
method to compute the optimal quadratic Lyapunov function for
maximizing the volume of the largest estimate of the DOA.  To obtain estimates of DOAs of uncertain
polynomial
systems,
the authors of \cite{cruck2001estimation} used discretization (in time) to
flow invariant sets backwards along the flow of the vector field.
In \cite{prakash2002obtaining},  quantifier elimination (QE) method via
QEPCAD was also applied to find Lyapunov functions for estimating
the DOA. However,
these methods cannot be applied directly in practice since most real
systems are non-polynomial systems,
i.e, their vector fields contain non-polynomial terms. For
this kind of systems, only a few approaches have been proposed to deal
with the DOA analysis. In \cite{chesi2005domain, chesi2009estimating,chesi2011domain}, the author
proposed an LMI technique through
Taylor expansions as substitution for non-polynomial terms, and this technique can be generalized to compute estimates of DOAs for uncertain non-polynomial systems. In
\cite{warthenpfuhl2010interval}, an interval arithmetic approach was proposed. Recently, \cite{saleme2011estimation, saleme2012new} suggested
a new method, based on quadratic Lyapunov function and the theorem
of Ehlich and Zeller.

In this paper, we will consider the problem of stability region analysis of uncertain non-polynomial systems.
Through multivariate polynomial interpolation together with the
interpolation
error analysis,
 we substitute a non-polynomial system as an uncertain polynomial system,
whose set of trajectories contains that of the original non-polynomial system.
By computing estimates of the DOA for the resulted uncertain polynomial system, we obtain estimates of the DOA for the original
non-polynomial system.
Our method is also applicable to the problem of searching for the
largest possible underestimate of the DOA via a fixed Lyapunov function.
Compared with the classical approximation by Taylor expansions,
the error bound obtained using our suggested
method is much sharper, which helps to yield a larger
estimate of the DOA for a given
non-polynomial system.

The rest of the paper is organized as follows. In
Section~\ref{sect:Pre}, some notions related to DOAs are presented.
In Section \ref{sec:approx}, a
polynomial approximation method, based on multivariate
polynomial interpolation and
interpolation
error analysis, is proposed to
substitute the non-polynomial functions
as uncertain polynomials. In Section \ref{sec:ROA},
bilinear SOS programming is applied to estimate DOAs of non-polynomial systems. In
Section~\ref{sec:exp}, experiments on some benchmarks are shown to
illustrate our suggested method. Section~\ref{sect:conclusion}
concludes the paper.


\section{Problem Formulation}\label{sect:Pre}


Consider
an autonomous system
\begin{equation}\label{EQ:nonlinear}
\dot \xx = \ff(\xx),
\end{equation}
where $\ff:D\subseteq\RR^n \rightarrow \RR^n$
is a continuous function defined on an open set $D$ and $\ff$
satisfies the Lipschitz condition:
$$
\|\ff(\xx)-\ff(\yy) \|\le L\| \xx-\yy\|\quad\mbox{for all $\xx,\yy\in D$.}
$$
Denote by $\phi(t;\xx_0)$ the
solution of (\ref{EQ:nonlinear}) with the given initial value  $\xx(0)=\xx_0$.

A vector $\xx\in \RR^n$ is an {\em equilibrium} point of the
system~(\ref{EQ:nonlinear}) if $\ff(\xx) =\bz$.
Since any equilibrium point can be shifted to the origin  $\bz$ via
a change of variables, we may
assume without loss of generality that the equilibrium point of interest occurs at the origin.
The equilibrium point $\bz$ of
(\ref{EQ:nonlinear}) is said to be {\em stable}, if for any $\epsilon >0$ there exists  $\delta$ such that whenever $\|\xx_0\|<\delta$ we have
    $\|\phi(t;\xx_0)\|<\epsilon$  for all $t > 0;$
the point $\bz$ is said to be {\em unstable} if it is not stable; $\bz$ is {\em asymptotically stable}, if, in addition to being stable, there exists $\delta$ such that $ \lim_{t\rightarrow \infty} \phi(t;\xx_0)=\bz$ whenever $\|\xx_0\|<\delta$;  the equilibrium point
    $\bz$ is {\em globally asymptotically stable}, if, in addition to being stable, we have $\lim_{t\rightarrow \infty} \phi(t;\xx_0)=\bz$  for all $\xx_0\in\RR^n$.

Globally asymptotic stability is very desirable
but
is
usually difficult to achieve. When the
equilibrium point $\bz$ is asymptotically stable, we are interested in
determining how far the trajectory of (\ref{EQ:nonlinear}) can be from $\bz$ and still
converge to $\bz$ as $t$ approaches $\infty$. This gives rise
to the following definition.

\begin{definition}[Domain of Attraction]\label{def:roa}
The {\em domain of attraction} (DOA)
of the
equilibrium point
$\bz$ for the system (\ref{EQ:nonlinear})
is defined to be the set $
\{ \xx\in \RR^n |
\lim_{t\rightarrow \infty} \phi(t; \xx)=\bz\}. $
\end{definition}

Usually, no algebraic description for
DOAs is available. So researchers are mainly concerned with
computing underestimates of the DOAs. Many well-established
techniques (\cite{chesi2005lmi, cruck2001estimation, CT1989,
jarvis2003lyapunov, prakash2002obtaining, tan2006stability,
topcu2010robust, tibken2000estimation, hachicho2002estimating,
tibken2002computation, prajna2004nonlinear}) are available for
computing estimates of DOAs for polynomial (control) systems, i.e.,  autonomous
systems with polynomial vector fields.
However, in practice, many autonomous systems often contain non-polynomial terms in their vector fields. Below is an example.
\begin{example}\cite[Example 1.2.1]{khalil2002nonlinear} %
Consider the simple pendulum shown in Figure 1. The motion of the
pendulum is described 
by the following equation
\begin{figure}
\centering
\includegraphics[width=0.35\textwidth]{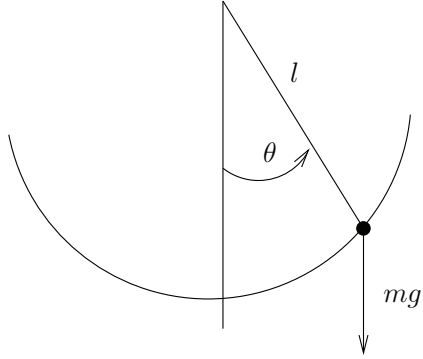} \caption{\small Pendulum.}
\end{figure}
$$ml\ddot{\theta} =-mg\sin\theta-kl\dot{\theta},$$
where $\theta$ denotes the angle subtended by the rod and the
vertical axis through the pivot point,~$l$ the length of the
rod, $m$ the mass of the bob, $g$ the acceleration due
to gravity, and~$k$ the coefficient of friction. Let us take the state variables
as $x_1=\theta$, and $x_2=\dot{\theta}$. Then
the above equation is converted into a non-polynomial system
\begin{eqnarray*}
\left\{\begin{array}{l@{}l}
   \dot{x}_1=x_2, \\
   \dot{x}_2=-\frac{g}{l}\sin x_1-\frac{k}{m}x_2.
   \end{array}\right.
\end{eqnarray*}
\vspace{-.2cm} $ \hfill \Box$
\end{example}

For the case of non-polynomial
(control) systems, the
problem
of computing DOAs
is still open, and
only a few approaches have been proposed to deal
with stability region analysis:
in \cite{chesi2005domain,
chesi2009estimating, chesi2011domain},
the authors suggested
a  way
to approximate
non-polynomial
vector fields by Taylor series expansion
at
the origin; in \cite{warthenpfuhl2010interval}, an interval
arithmetic approach for the estimation of the DOA
was
proposed;
and recently, a method based on the theorem
by
 Ehlich and Zeller was presented in
\cite{saleme2011estimation, saleme2012new}. In this paper,
we will apply polynomial
approximation to transform
a non-polynomial system into
an uncertain polynomial system,
whose
set of trajectories contains that of the original non-polynomial system. Therefore, underestimate estimates of the DOA of the latter system yield those for the original
non-polynomial
system.


\section{Polynomial Approximation}\label{sec:approx}

A key problem in estimating the DOA of a non-polynomial system is how
to approximate the involved non-polynomial terms using polynomials,
yielding
an uncertain polynomial
system
with the equilibrium $\bz$ being kept. This problem is
further reduced to the following problem.
\begin{problem}
Let $\phi(\xx):\Psi\rightarrow \RR $ be a non-polynomial function where $\Psi\subset \RR^n$ is a bounded subset containing the origin $\bz$.
Given
$d\in\ZZ_{\ge 0}$, we will find a polynomial $p(\xx)$ with degree $d$
such that the
error function $r_{d}(\xx)=\phi(\xx)-p(\xx)$ satisfies
   $r_{d}(\bz)=0$ and the value $\max_{\xx\in \Psi} |r_d(\xx)|$ is minimized.
\end{problem}
The classic method of polynomial approximation is
Taylor expansions.
%
Suppose $\phi(\xx)$ is a $d$ times continuously
differentiable in $\Psi$.
The Taylor expansion of $\phi(\xx)$ at the origin $\bz$  is
\begin{equation*}\label{taylor_exp}
\phi(\xx)=
\underbrace{\sum_{|\alpha|\leq d-1}
  \frac{D^{\alpha}\phi(\bz)}{\alpha!}\xx^\alpha}_{p(\xx)}+\underbrace{\sum_{|\beta|=d}
  \frac{D^{\beta}\phi(\xi)}{\beta!}
  \xx^\beta}_{r_d(\xx)}
\end{equation*}
for some $\xi\in(0,\xx)$.
In the above expression,
$p(\xx)$ is an approximate polynomial of $\phi(\xx)$ and the remainder term $r_d(\xx)$ is the
error function of this approximation.
Clearly, if the size of the region $\Psi$ is small enough, the
above
Taylor expansion
yields a tight bound of
$r_d(\xx)
$
for all $\xx\in \Psi$.
However,
when the size of $\Psi$ is large, the associated
error
bound may be too
loose.

To obtain a tighter bound, we will apply multivariate polynomial
interpolation (\cite{gasca2000history}) to compute an approximate
polynomial $p(\xx)$ of $\phi(\xx)$ with a given degree $d$.
%
Fix the graded lexicographic order
in $\RR[\xx]$. For the function~$\phi(\xx)$, one may find the minimal
monomial
$\xx^{\gamma}$ with $
\gamma:=
(\gamma_{1},\ldots,\gamma_{n}) \in {\ZZ}_{\geq 0}^{n}$,
such that
$\lim_{\xx\rightarrow\bz}\frac{\phi(\xx)-\phi(\bz)}{\xx^\gamma}\neq 0.$
Set
$\psi(\xx)=\frac{\phi(\xx)-\phi(\bz)}{\xx^\gamma}.$
Let $d\in \ZZ_{\ge 0}$ be such that $d> |\gamma|:=\sum_{i=1}^n \gamma_i$.
%
%
We
construct  a mesh~$M$ on~$\Psi$ with mesh spacing $s\in\RR_+$ and
mesh points set $\chi=\{
 \vv_1, \vv_2, \dots, \vv_k\}$
where
$k=\begin{pmatrix}n+d-|\gamma|\\ n\end{pmatrix}$.
Like in \cite{asarin2003reachability}, the meshes in our paper are
either rectangular or simplicial. Then, we apply Lagrange
interpolation to construct a polynomial~$\tilde{p}(\xx)$  as an
approximation of $\psi(\xx)$ through the interpolation points
$\chi$, i.e., $
   \tilde{p}(\vv_i)=\psi(\vv_i), \mbox{for } i=1,\dots,k.
$
%
%
Next, we will compute a tight bound  of
the interpolation error function
$ \tilde{r}(\xx):=\psi(\xx)-\tilde{p}(\xx)$.
Our idea is based on the following lemma.
\begin{lemma}\label{thm:Zeng}
\cite[Theorem 3]{zeng2010} Let $K\subset\RR^n$ be a convex
polyhedron, and $\vv_1,\vv_2,\dots,\vv_k$ and $s$ be the vertices and
diameter of $K$ respectively. Suppose that $\varphi: K\rightarrow\RR$ is
a continuous and differential function on $K$, and
$\lambda=\sup_{\xx\in K} \|\bigtriangledown \varphi(\xx)\|$. Then
for all $a_1,a_2,\dots,a_k\in\RR_+$ such that $ a_1+a_2+\cdots+a_k=1$, we have
$$
  |\varphi(\xx)-(a_1\varphi(\vv_1)+a_2\varphi(\vv_2)+\cdots+a_k\varphi(\vv_k))|\leq \frac{n}{n+1}\lambda s.
$$
\end{lemma}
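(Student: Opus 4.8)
The plan is to read the left-hand side with $\xx$ denoting the convex combination point $\xx=\sum_{i=1}^{k} a_i\vv_i\in K$, so that $\sum_i a_i\varphi(\vv_i)$ is the barycentric value attached to $\xx$; the assertion is then that $\varphi$ cannot differ from this barycentric value by more than $\frac{n}{n+1}\lambda s$. First I would reduce the functional estimate to a purely geometric one. Using $\sum_i a_i=1$, I write $\varphi(\xx)-\sum_i a_i\varphi(\vv_i)=\sum_i a_i\bigl(\varphi(\xx)-\varphi(\vv_i)\bigr)$. Along each segment from $\vv_i$ to $\xx$, which lies in the convex set $K$, the fundamental theorem of calculus gives $\varphi(\xx)-\varphi(\vv_i)=\int_0^1 \nabla\varphi\bigl(\vv_i+t(\xx-\vv_i)\bigr)\cdot(\xx-\vv_i)\,dt$, whence $|\varphi(\xx)-\varphi(\vv_i)|\le\lambda\,\|\xx-\vv_i\|$ by Cauchy--Schwarz together with $\lambda=\sup_{K}\|\nabla\varphi\|$. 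Summing with the weights yields $\bigl|\varphi(\xx)-\sum_i a_i\varphi(\vv_i)\bigr|\le\lambda\sum_i a_i\|\xx-\vv_i\|$, so it suffices to establish the geometric inequality $\sum_i a_i\|\xx-\vv_i\|\le\frac{n}{n+1}s$.

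For the geometric inequality I would pass to the $\ell^2$ average and exploit that $\xx$ is the weighted centroid. The variance identity $\sum_i a_i\|\xx-\vv_i\|^2=\sum_i a_i\|\vv_i-\cc\|^2-\|\xx-\cc\|^2$ holds for any reference point $\cc$, precisely because $\xx=\sum_i a_i\vv_i$. Choosing $\cc$ to be the center of the minimal enclosing ball of $K$ and $R$ its radius, every $\|\vv_i-\cc\|\le R$, so $\sum_i a_i\|\xx-\vv_i\|^2\le R^2$, and the Cauchy--Schwarz inequality (equivalently, Jensen for the concave square root, using $\sum_i a_i=1$) gives $\sum_i a_i\|\xx-\vv_i\|\le R$. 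Jung's theorem bounds the circumradius of a set of diameter $s$ in $\RR^n$ by $R\le s\sqrt{\tfrac{n}{2(n+1)}}$, and a one-line check shows $s\sqrt{\tfrac{n}{2(n+1)}}\le\frac{n}{n+1}s$ for every $n\ge 1$ (with equality at $n=1$), which closes the estimate.

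In the simplicial case actually used in the paper one can sidestep Jung entirely, which I would record as a remark: here $K$ has $k=n+1$ vertices, so $\|\xx-\vv_i\|=\bigl\|\sum_{j\ne i}a_j(\vv_j-\vv_i)\bigr\|\le(1-a_i)s$, and hence $\sum_i a_i\|\xx-\vv_i\|\le s\sum_i a_i(1-a_i)=s\bigl(1-\sum_i a_i^2\bigr)\le\frac{n}{n+1}s$, the last step using $\sum_i a_i^2\ge\frac{1}{n+1}$. This elementary argument makes transparent that the factor $\frac{n}{n+1}$ is the centroid-to-vertex ratio of an $n$-simplex.

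The routine part is the reduction through the mean value / fundamental theorem step; the crux is the geometric inequality, and in particular the appearance of the dimension-dependent constant $\frac{n}{n+1}$. The naive triangle-inequality bound $\|\xx-\vv_i\|\le s$ only yields the factor $1$, so the entire gain comes from controlling the weighted spread about the centroid, i.e.\ from bounding the circumradius by the diameter (Jung) or, in the simplicial case, from $\sum_i a_i^2\ge\frac{1}{n+1}$. I expect the only delicate bookkeeping to be verifying that this dimensional input produces exactly $\frac{n}{n+1}$ rather than the sharper $\sqrt{n/(2(n+1))}$, so that the stated constant is correct but deliberately not tight for $n\ge 2$.
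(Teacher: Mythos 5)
Your proof is correct, but there is nothing in the paper to compare it against: the paper does not prove Lemma~\ref{thm:Zeng} at all, it simply imports it as \cite[Theorem 3]{zeng2010} (a Chinese-language reference on a geometric inequality of Zirakzadeh), and the only proof the paper supplies in this neighborhood is the one-line deduction of Corollary~\ref{thm:bound_error}, where all vertex values vanish. So your argument is a genuine addition rather than an alternative route. Two things in it are worth highlighting. First, you correctly resolved the ambiguity in the statement: $\xx$ is nowhere quantified, and reading it as an arbitrary point of $K$ makes the claim false (take $n=1$, $K=[0,1]$, $\varphi(x)=x$, $\xx=0$, $a_2=1$); your reading $\xx=\sum_i a_i\vv_i$ is the only one under which the lemma holds, and it still suffices for the paper's use in Corollary~\ref{thm:bound_error}, since every point of the polyhedron is some convex combination of the vertices and the weights there are irrelevant. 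Second, your chain --- mean-value reduction, the variance identity about the weighted centroid, Cauchy--Schwarz, and Jung's theorem --- actually yields the sharper constant $\sqrt{n/(2(n+1))}\,\lambda s$, which dominates the stated $\frac{n}{n+1}\lambda s$ for every $n\ge 1$ (with equality only at $n=1$); and your elementary simplicial argument via $\sum_i a_i^2\ge\frac{1}{n+1}$ recovers exactly the paper's constant for the meshes (rectangular or simplicial) that the paper actually uses, without invoking Jung. All the individual steps check out: the variance identity is an exact computation using $\sum_i a_i\vv_i=\xx$, and the comparison $\sqrt{n/(2(n+1))}\le n/(n+1)$ reduces to $n\ge 1$.
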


The following corollary gives an estimated bound of $\tilde r(\xx)$ for $\xx\in \Psi \cap M$.

\begin{corollary}\label{thm:bound_error}
Let $s$ and $\chi :=\{ \vv_1,  \vv_2, \dots,  \vv_k \}$ be the mesh
spacing and mesh points set of~$M$, respectively. Suppose that
$\tilde{p}(\xx)$ is the interpolation polynomial of $\psi(\xx)$
through $\chi$, and $ \tilde{r}(\xx) =\psi(\xx)-\tilde{p}(\xx)$ is
the corresponding error function.
Let $\lambda=\sup_{\xx\in
\Psi\cap M} \|\bigtriangledown \tilde{r}(\xx)\|$.
Then
\begin{equation*}
  |\tilde{r}(\xx)|\leq  \frac{n}{n+1} \lambda s\quad\mbox{ for all $\xx\in M$}.
\end{equation*}
\end{corollary}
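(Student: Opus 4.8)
The plan is to apply Lemma~\ref{thm:Zeng} locally, on the mesh cell that contains a given point, and to exploit the fact that the interpolation error $\tilde{r}$ vanishes at every node of $\chi$. First I would record this vanishing: since $\tilde{p}$ is the interpolation polynomial of $\psi$ through $\chi$, we have $\tilde{r}(\vv_i)=\psi(\vv_i)-\tilde{p}(\vv_i)=0$ for every $\vv_i\in\chi$. This single observation is what reduces the interpolation-error estimate to the gradient estimate supplied by the lemma.

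Next, I would fix an arbitrary $\xx\in M$. Because $M$ is assembled from rectangular or simplicial cells, each of which is a convex polyhedron whose vertices lie in $\chi$, the point $\xx$ belongs to some cell $K$ with vertices $\vv_{i_1},\dots,\vv_{i_m}\in\chi$ and diameter at most the mesh spacing $s$. Writing $\xx$ as a convex combination $\xx=\sum_{j=1}^{m} a_j\,\vv_{i_j}$ with $a_j\geq 0$ and $\sum_{j} a_j=1$, I would invoke Lemma~\ref{thm:Zeng} with $\varphi=\tilde{r}$ on $K$, setting $\lambda_K:=\sup_{\xx\in K}\|\nabla \tilde{r}(\xx)\|$ and letting $s_K$ denote the diameter of $K$.

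The key simplification is that every $\tilde{r}(\vv_{i_j})=0$, so the interpolant term $\sum_{j} a_j\,\tilde{r}(\vv_{i_j})$ vanishes identically and the lemma collapses to $|\tilde{r}(\xx)|\leq \frac{n}{n+1}\lambda_K s_K$. Since $K\subseteq \Psi\cap M$ yields $\lambda_K\leq\lambda$, and since $s_K\leq s$, the desired estimate $|\tilde{r}(\xx)|\leq \frac{n}{n+1}\lambda s$ follows at once; as $\xx\in M$ was arbitrary, the bound holds throughout $M$. Note also that the constant $\frac{n}{n+1}$ in Lemma~\ref{thm:Zeng} depends only on the ambient dimension $n$, not on the number of vertices of the cell, so the argument is indifferent to whether the cell $K$ is a simplex or a box.

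I expect the principal obstacle to be geometric bookkeeping rather than analysis: one must verify that the mesh cells are genuinely convex polyhedra whose vertex sets are subsets of the interpolation nodes $\chi$ (so that $\tilde{r}$ really is zero at them), and that each cell's diameter is controlled by the spacing $s$. The differentiability hypothesis of Lemma~\ref{thm:Zeng} further requires the finiteness of $\lambda$, which is precisely what the corollary assumes; once that is granted, the vanishing of $\tilde{r}$ at the nodes does all the remaining work.
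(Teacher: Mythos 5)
Your proof is correct and rests on the same two ingredients as the paper's own proof: the observation that $\tilde{r}(\vv_i)=\psi(\vv_i)-\tilde{p}(\vv_i)=0$ at every node, and an application of Lemma~\ref{thm:Zeng} to $\varphi=\tilde{r}$, so that the convex-combination term drops out and the gradient bound alone controls $|\tilde{r}(\xx)|$. The one genuine difference is that you apply the lemma \emph{locally}, cell by cell: you take $K$ to be the rectangular or simplicial cell containing $\xx$, whose vertices lie in $\chi$ and whose diameter is at most $s$, and then pass from $\lambda_K$, $s_K$ to $\lambda$, $s$. The paper instead invokes the lemma once, globally, with all of $\chi$ playing the role of the vertex set --- which, read literally, conflates the mesh spacing $s$ with the diameter of the polyhedron spanned by $\chi$, and ignores that interior mesh points are not vertices of that hull. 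Your localized application is the reading that actually makes the constant $\frac{n}{n+1}\lambda s$ legitimate, at the modest price of the ``geometric bookkeeping'' you flag: each cell must be a convex polyhedron whose vertices are interpolation nodes and whose diameter is bounded by $s$, which holds for the rectangular and simplicial meshes the paper uses. One cosmetic remark: explicitly writing $\xx$ as a convex combination of the cell's vertices is unnecessary, since Lemma~\ref{thm:Zeng} holds for \emph{every} choice of weights $a_j$, and the vanishing of $\tilde{r}$ at the vertices makes any choice give the same conclusion.
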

\begin{proof}
Clearly, $\tilde{r}(\xx)$ is a continuous and differential
function on $M$, and
$$
\tilde{r}(\vv_1)=\tilde{r}(\vv_2)=\cdots=\tilde{r}(\vv_k)=0.
$$
 Thus,
according to Lemma \ref{thm:Zeng}, for all
$a_1,a_2,\dots,a_v\in\RR_+$ such that $ a_1+a_2+\cdots+a_v=1$,
\begin{eqnarray*}
\begin{split}
|\tilde{r}(\xx)-(a_1\tilde{r}(\vv_1)+a_2\tilde{r}(\vv_2)+\cdots+a_k\tilde{r}(\vv_k))|
=|\tilde{r}(\xx)|\leq\frac{n}{n+1}\lambda s.
\end{split}
\end{eqnarray*}
\end{proof}

Therefore,
a non-polynomial function $\phi(\xx)$ can be relaxed to an uncertain polynomial,
as shown in the following theorem.
\begin{theorem}\label{lm:inclusion}
For a non-polynomial function $\phi(\xx)$ with $\xx\in \Psi$, let
$\xx^{\gamma}$ with $\gamma \in {\ZZ}_{\geq 0}^{n}$ be the minimal
monomial such that
$\lim_{\xx\rightarrow\bz}\frac{\phi(\xx)-\phi(\bz)}{\xx^\gamma}\neq
0$. Let $M$ be a mesh on $\Psi$ with mesh spacing~$s\in\RR_+$ and
the mesh point set $\chi=\{\vv_1, \vv_2, \dots, \vv_k\}$,
in which $k=\begin{pmatrix}n+d-|\gamma|\\
n\end{pmatrix}$. Suppose that $\tilde{p}(\xx)$ is  the interpolation
polynomial of $\psi(\xx):=\frac{\phi(\xx)-\phi(\bz)}{\xx^\gamma}$ at
$\chi$
with degree $
\leq d-|\gamma|$, $\tilde{r}(\xx)$ is the associated
interpolation
error function, and
$\lambda=\sup_{\xx\in 
\Psi\cap M} \|\bigtriangledown \tilde{r}(\xx)\|$.
Then for each $\xx \in
\Psi\cap M$
we have
$\phi(\xx)=p(\xx)+r_d(\xx),$  where
\begin{equation}\label{inclusion2}
p(\xx)=\phi(\bz)+\tilde{p}(\xx)\,\xx^{\gamma} \text{ and }
r_{d}(\xx)=u \,\xx^{\gamma}\mbox{ with } |u|\leq
\frac{n}{n+1}\lambda s.
\end{equation}
\end{theorem}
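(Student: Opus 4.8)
The plan is to obtain the claimed decomposition by a direct algebraic manipulation, deferring the only quantitative content to Corollary~\ref{thm:bound_error}. First I would observe that the definition $\psi(\xx)=\frac{\phi(\xx)-\phi(\bz)}{\xx^{\gamma}}$ rearranges immediately into the exact identity
\[
\phi(\xx)=\phi(\bz)+\psi(\xx)\,\xx^{\gamma},
\]
valid for every $\xx$ on which $\psi$ is defined, in particular for all $\xx\in\Psi\cap M$. Since the minimality of $\gamma$ forces $|\gamma|\ge 1$ (otherwise $\xx^{\gamma}\equiv 1$ and the defining limit would vanish), this rewriting also guarantees that $\xx^{\gamma}$ vanishes at $\bz$, a fact I will reuse at the end.

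Next I would split $\psi$ into its interpolant and its interpolation error, $\psi(\xx)=\tilde{p}(\xx)+\tilde{r}(\xx)$, which is nothing but the definition of $\tilde r$. Substituting this into the identity above and regrouping yields
\[
\phi(\xx)=\underbrace{\phi(\bz)+\tilde{p}(\xx)\,\xx^{\gamma}}_{p(\xx)}+\underbrace{\tilde{r}(\xx)\,\xx^{\gamma}}_{r_d(\xx)},
\]
which identifies $p$ and $r_d$ exactly as in (\ref{inclusion2}). A brief degree check confirms that $p$ is admissible: $\tilde p$ has degree $\le d-|\gamma|$ by hypothesis, so $\tilde p(\xx)\,\xx^{\gamma}$ has degree $\le d$, and adjoining the constant $\phi(\bz)$ keeps the degree $\le d$.

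To produce the uncertainty parameter $u$, I would set $u:=\tilde{r}(\xx)$ at the point $\xx$ under consideration, so that $r_d(\xx)=u\,\xx^{\gamma}$ holds by construction. The bound $|u|\le\frac{n}{n+1}\lambda s$ is then precisely the conclusion of Corollary~\ref{thm:bound_error}, applied with the same $\lambda=\sup_{\xx\in\Psi\cap M}\|\nabla\tilde r(\xx)\|$ and mesh spacing $s$; the corollary is applicable because $\tilde r=\psi-\tilde p$ is continuous and differentiable on $M$ and vanishes at all mesh vertices $\vv_1,\dots,\vv_k$. Finally, since $|\gamma|\ge 1$ we have $\bz^{\gamma}=0$, whence $r_d(\bz)=u\cdot\bz^{\gamma}=0$, so the equilibrium at the origin is preserved as required.

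The derivation itself is essentially mechanical once Corollary~\ref{thm:bound_error} is available, so I expect no serious obstacle in the algebra. The one point demanding genuine care is verifying that $\psi$, and hence $\tilde r$, is truly continuously differentiable on $\Psi\cap M$, which is exactly what Corollary~\ref{thm:bound_error} needs. This is precisely what the minimality of the monomial $\xx^{\gamma}$ secures: dividing $\phi(\xx)-\phi(\bz)$ by the lowest-order monomial removes the singularity at $\bz$ and produces a function with a finite, nonzero limit there. I would make this regularity explicit before invoking the corollary, so that the pointwise gradient bound defining $\lambda$ is well posed.
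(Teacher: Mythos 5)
Your proposal is correct and takes essentially the same route as the paper: the paper gives no separate proof of Theorem~\ref{lm:inclusion}, presenting it (``Therefore, \dots'') as an immediate consequence of Corollary~\ref{thm:bound_error} via exactly your decomposition $\phi(\xx)=\phi(\bz)+\bigl(\tilde{p}(\xx)+\tilde{r}(\xx)\bigr)\xx^{\gamma}$ with $u=\tilde{r}(\xx)$ bounded by the corollary. Your additional checks (minimality of $\gamma$ forcing $|\gamma|\ge 1$ so that $r_d(\bz)=0$, and the degree count for $p$) merely make explicit what the paper leaves implicit.
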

%
Clearly, the bound of the error $r_d(\xx)$ in (\ref{inclusion2})
depends on the mesh spacing $s$, which can yield a tighter bound.
Furthermore,  
the bound of $r_d(\xx)$ in (\ref{inclusion2}) will converge to zero
if $d\rightarrow\infty$.

\begin{example}\label{ex2}
Consider the function $\phi(x)=\cos x$ with $x\in\Psi=[-1.2, 1.2]$.
We want to compute a polynomial approximation for $\cos x$. Based on
Theorem~\ref{lm:inclusion}, we can obtain an uncertain
polynomial 
with degree $6$,
where
\begin{equation*} \left.\begin{array}{l@{}lr}
&p(x)=1-0.5x^2+0.0416525x^4-0.00134386x^6, & \\
&r_6(x)=ux^2, \,\, -0.0000336\leq u\leq 0.0000336. & \Box
\end{array}\right.
\end{equation*}
\end{example}

\section{Computation of Domain of Attraction}\label{sec:ROA}
In this section, we will consider an uncertain non-polynomial system of the form:
\begin{equation}\label{eq:nonpolynomial}
 \dot{\xx}=\ff(\xx,\theta)
\quad \mbox{ for all }\theta \in \Theta\subset \RR^t,
\end{equation}
where $\theta$ denotes
a vector of uncertainty.
Assume that the equilibrium point of
interest occurs at the origin $\bz$, i.e, $\ff(\bz,\theta)=\bz$ for all
$\theta\in\Theta$.
Denote by $\phi(t;\xx_0,\theta)$ the solution of (\ref{eq:nonpolynomial}) for the initial value  $\xx(0)=\xx_0$ and the uncertainty $\theta$.
The {\em Domain of Attraction} (DOA) of the system (\ref{eq:nonpolynomial}) is
defined as
$$
  \{ \xx\in \RR^n |
\lim_{t\rightarrow \infty} \phi(t; \xx,
\theta
)=\bz \quad\mbox{for all }
\theta\in\Theta\}.
$$
Lemma 1 in \cite{tan2006stability} can be modified a bit to compute underestimates of the DOA for (\ref{eq:nonpolynomial}) through Lyapunov functions, as described in the following theorem.
\begin{theorem}\label{thm:ROA}
\cite[Proposition 2.1]{topcu2010robust}
If there exists a continuously differentiable
function $V:\RR^n\rightarrow \RR$  such that
\begin{equation*}\label{condition:variable}
\left\{\begin{array}{l@{}l}
&  \Omega_V:=\{\xx\in\RR^n: V(\xx)\leq 1\} \text{ is bounded}, \\
&  V({\bz})=0,\\
&  V(\xx)>0,  \,\,\,\,\forall \xx\in \Omega_V\backslash\{\bz\},\\
& \dot{V}(\xx)=\frac{\partial V}{\partial \xx}\cdot \ff(\xx,\theta)< 0,
\,\,\,\, \forall \xx\in \Omega_V\backslash\{\bz\},\,\forall\theta\in\Theta
\end{array}\right.
\end{equation*}
then  $\Omega_V$
is an invariant subset of the DOA.
\end{theorem}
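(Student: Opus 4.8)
The plan is to run the classical Lyapunov argument for asymptotic stability, taking care that every estimate is made \emph{uniform} in the uncertainty $\theta\in\Theta$. First I would establish forward invariance of $\Omega_V$. Fix $\xx_0\in\Omega_V$ and any $\theta\in\Theta$, and suppose for contradiction that the trajectory leaves $\Omega_V$. Since $V$ is continuous and $V(\phi(0;\xx_0,\theta))\le 1$, there is a first exit time $t^\ast:=\inf\{t>0:V(\phi(t;\xx_0,\theta))>1\}$, at which $V(\phi(t^\ast))=1$ and the right derivative of $t\mapsto V(\phi(t))$ is $\ge 0$. But $\phi(t^\ast)\in\Omega_V\setminus\{\bz\}$ (it cannot be the origin, where $V=0\ne 1$), so the hypothesis gives $\dot V(\phi(t^\ast),\theta)<0$, contradicting the nonnegative right derivative. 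Hence $\phi(t;\xx_0,\theta)\in\Omega_V$ for all $t\ge 0$. Because $V$ is continuous, $\Omega_V$ is closed, and it is bounded by assumption, so it is compact; the trajectory stays in this fixed compact set, and by the standard continuation argument the solution exists for all $t\ge 0$ with no finite escape time.

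Next I would exploit monotonicity. Along the trajectory $\frac{d}{dt}V(\phi(t))=\dot V(\phi(t),\theta)\le 0$, so $t\mapsto V(\phi(t;\xx_0,\theta))$ is nonincreasing and bounded below by $0$; it therefore converges to some limit $c=c(\xx_0,\theta)\ge 0$. The goal is to show $c=0$, which together with the positive definiteness of $V$ will force $\phi(t)\to\bz$.

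The crux is to rule out $c>0$, and this is where the uncertain setting requires care. If $c>0$, then $V(\phi(t))\ge c$ for all $t$, so the trajectory stays in the set $S:=\{\xx\in\Omega_V:c\le V(\xx)\le 1\}$, which is compact and, since $V(\bz)=0<c$ and $V$ is positive away from the origin, excludes a neighborhood of $\bz$. On $S\times\Theta$ the continuous function $\dot V(\xx,\theta)$ is everywhere strictly negative by hypothesis; provided $\Theta$ is compact (the natural standing assumption of the robust formulation) and $\ff$ is jointly continuous in $(\xx,\theta)$, this function attains a maximum $-\gamma:=\max_{\xx\in S,\,\theta\in\Theta}\dot V(\xx,\theta)<0$. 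Then $\frac{d}{dt}V(\phi(t))\le -\gamma$ for all $t$, giving $V(\phi(t))\le V(\xx_0)-\gamma t$, which becomes negative for large $t$ and contradicts $V\ge 0$. Therefore $c=0$.

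Finally I would convert $V(\phi(t))\to 0$ into $\phi(t)\to\bz$: for any $\delta>0$ the set $\{\xx\in\Omega_V:\|\xx\|\ge\delta\}$ is compact and avoids the origin, so $m_\delta:=\min V>0$ on it; once $V(\phi(t))<m_\delta$ the trajectory lies in $\|\xx\|<\delta$, whence $\phi(t;\xx_0,\theta)\to\bz$. Since $\xx_0\in\Omega_V$ and $\theta\in\Theta$ were arbitrary, $\Omega_V$ is a forward-invariant subset of the DOA. I expect the main obstacle to be precisely the uniformity in $\theta$ in the third step: the pointwise strict inequality $\dot V<0$ does not by itself furnish a single negative bound $-\gamma$ valid for all parameters, so one genuinely needs compactness of $\Theta$ and joint continuity of $\ff$ to pass from the pointwise hypothesis to the uniform decrease that drives the contradiction.
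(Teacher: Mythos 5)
Your argument is correct, but note that the paper itself offers no proof of this statement at all: it is quoted verbatim as Proposition~2.1 of the cited work of Topcu, Packard, Seiler and Balas (reference \cite{topcu2010robust}), with only the remark that it is a variant of Lemma~1 in \cite{tan2006stability}. So what you have supplied is the standard Lyapunov invariance argument that underlies the cited result: first-exit-time contradiction for forward invariance, monotone convergence of $V$ along trajectories, a compactness argument to exclude a positive limit, and positive definiteness to convert $V(\phi(t))\to 0$ into $\phi(t)\to\bz$. All four steps are sound.

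One refinement is worth making, because your closing assessment of where the difficulty lies is actually backwards. You claim that compactness of $\Theta$ and joint continuity of $\ff$ in $(\xx,\theta)$ are ``genuinely needed'' to extract the uniform decrease rate $-\gamma$. They are not. In the system $\dot\xx=\ff(\xx,\theta)$ the uncertainty $\theta$ is a \emph{constant} parameter along each trajectory, and the DOA of the uncertain system only requires $\lim_{t\to\infty}\phi(t;\xx_0,\theta)=\bz$ separately for each fixed $\theta\in\Theta$; no uniformity over $\theta$ is asked for. Hence you may fix $\theta$ at the very start of the proof and run every step for that single parameter value: in the third step the maximum
\begin{equation*}
-\gamma(\theta):=\max_{\xx\in S}\;\frac{\partial V}{\partial\xx}\cdot\ff(\xx,\theta)<0
\end{equation*}
exists by compactness of $S$ and continuity of $\xx\mapsto\ff(\xx,\theta)$ alone, and the bound $V(\phi(t))\le V(\xx_0)-\gamma(\theta)\,t$ already yields the contradiction. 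The constant $\gamma(\theta)$ may degenerate as $\theta$ ranges over a non-compact $\Theta$, but that only affects uniformity of the convergence rate, not membership in the DOA as defined in the paper. So your proof is correct, but the extra hypotheses you flag as essential are superfluous, and dropping them makes the theorem applicable exactly as stated (the paper never assumes $\Theta$ compact; later it only takes $\Theta$ semialgebraic).
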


%
When the equilibrium $\bz$ is asymptotically stable, the set $\Omega_c$ is clearly an underestimate of the DOA since every trajectory starting in $\Omega_c$
remains in $\Omega_c$ and approaches $\bz$ as $t\rightarrow
\infty$. And, if the equilibrium $\bz$ is globally asymptotically stable then the DOA will be the whole space~$\RR^n$.

To enlarge the estimate
$\Omega_V$ given in Theorem \ref{thm:ROA},
\cite{tan2006stability} defined a variable sized region
$$
P_\beta=\{\xx\in\RR^n: g(\xx)\leq \beta\}
$$
with $g(\xx)$ a fixed and positive definite
polynomial in $\RR[\xx]$, for instance, $g(\xx)= \sum_{i=1}^n x_i^2$, and maximize $\beta$ subject to the constraint
$P_\beta\subseteq \Omega_V$ and the constraints in Theorem \ref{thm:ROA}.  Thus, the problem of computing
$\Omega_V$ can be transformed into the following problem:
\begin{eqnarray}\label{ROA:variable}
\left.\begin{array}{l@{}l}
    & \displaystyle \max  \beta   \\
    & \text{s.t.} \,\,\,\Omega_V \text{ is bounded}, \\
    & \quad\,\,\, V(\bz)=0,\\
    & \quad\,\,\, V(\xx)>0, \,\,\,\,\forall \xx\in \Omega_V\backslash\{\bz\},\\
    & \quad\,\,\,\dot V(\xx)=\frac{\partial V}{\partial \xx}\cdot\ff(\xx,\theta)<0, \,\,\,\,\forall \xx\in \Omega_V\backslash\{\bz\},\,\forall\theta\in\Theta\\
    &\quad\,\,\, g(\xx)\leq \beta\models V(\xx)\leq 1.
\end{array}\right\}
\end{eqnarray}

Suppose that the non-polynomial system (\ref{eq:nonpolynomial}) has the following form
\begin{equation}\label{eq:vectorfield}
 \dot{x}_i=f_i(\xx,\theta)=f_{i0}(\xx,\theta)+\sum_{j=1}^k f_{ij}(\xx,\theta)\phi_{ij}(\xx),\quad i=1,\dots, n,
\end{equation}
where $f_{ij}:\RR^n\times\RR^t \rightarrow \RR$ are
polynomials in $\xx$ for $j=0,\dots, k$,
and $\phi_{ij}:\RR^n \rightarrow \RR$
are non-polynomial
functions for $j=1,\dots, k$.
Using the polynomial approximation technique
in Section
\ref{sec:approx},
we can replace each non-polynomial term $\phi_{ij}(\xx)$
by an uncertain polynomial $p_{ij}(\xx)+u_{ij}\xx^{\gamma_{ij}}$ with the bound $|u_{ij}|\leq b_{ij}$.
This gives rise to
the
following uncertain polynomial system:
\begin{equation}\label{eq:uncertain}
\dot{x}_i=\hat{f}_i(\xx,\theta)=f_{i0}(\xx,\theta)+\sum_{j=1}^k
f_{ij}(\xx,\theta)(p_{ij}(\xx)+u_{ij}\xx^{\gamma_{ij}})
\end{equation}
where $|u_{ij}|\leq b_{ij}$ for  $i=1,\dots, n$.
It is not hard to prove that
the set of all trajectories of the system (\ref{eq:vectorfield}) is a subset of that of the system (\ref{eq:uncertain}),
and, consequently, the DOA of (\ref{eq:uncertain}) is actually  a subset of that of the system (\ref{eq:vectorfield}). Furthermore, the tighter the
bound $b_{ij}$ is, the closer
the DOA of (\ref{eq:uncertain}) is  to the DOA of the original system.

Next, we consider how to find an optimal estimate of the DOA
for the uncertain system~(\ref{eq:vectorfield})
through computing the $\Omega_V$ in (\ref{ROA:variable}).
Remark that the constraint $\dot V(\xx)=\frac{\partial V}{\partial \xx}\cdot\ff(\xx,\theta)<0$ in (\ref{eq:vectorfield})
may involve non-polynomial terms due to the existence of $\phi_{ij}(\xx)$'s. In this situation,
replacing the above constraint  by $\dot V(\xx)=\frac{\partial V}{\partial \xx}\cdot\hat\ff(\xx,\theta)<0$,
the problem (\ref{ROA:variable}) can be relaxed
as follow.
\begin{theorem}\label{thm:variable}
Let
$\hat \ff(\xx,\theta)=(\hat f_1(\xx,\theta),\dots, \hat f_n
(\xx,\theta))^T$ with $\hat f_i(\xx,\theta)$ given in
(\ref{eq:uncertain}). If $\widetilde{V}(\xx)$ is a solution of the
following polynomial optimization
\begin{eqnarray}\label{ROA:variable1}
\left.\begin{array}{l@{}l}
    & \displaystyle \max_{V, \beta}\  \beta   \\
    & \text{s.t.} \, V(\bz)=0,\\
    & \quad V(\xx)>0 \quad\forall \xx\in \Omega_V\backslash\{\bz\},\\
    & \,\,\,\frac{\partial V}{\partial \xx}\cdot\hat{\ff}(\xx,\theta)<0 \,\forall \xx\in \Omega_V\backslash\{\bz\},\forall\theta\in\Theta,\forall u_{ij}\in\{\pm b_{ij}\},\\
    & \quad g(\xx)\leq\beta\models V(\xx)\leq 1,
\end{array}\right\}
\end{eqnarray}
then $\Omega_{\widetilde{V}}:=\{\xx\in\RR^n: \widetilde{V}(\xx)\leq
1\}$ is an invariant subset of the DOA for (\ref{eq:uncertain}),
and therefore an invariant subset of DOA for (\ref{eq:vectorfield}).
\end{theorem}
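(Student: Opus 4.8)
The plan is to verify that the maximizer $\widetilde{V}$ produced by (\ref{ROA:variable1}) actually satisfies all four hypotheses of Theorem~\ref{thm:ROA} for the uncertain polynomial system (\ref{eq:uncertain}), with the uncertainty ranging over the \emph{entire} box $u_{ij}\in[-b_{ij},b_{ij}]$ rather than only over the vertices $u_{ij}\in\{\pm b_{ij}\}$ that appear in the program. The boundedness of $\Omega_{\widetilde V}$, the condition $\widetilde V(\bz)=0$, and the positivity $\widetilde V(\xx)>0$ on $\Omega_{\widetilde V}\setminus\{\bz\}$ are imposed verbatim by the program and carry over directly. Hence the whole burden of the argument falls on upgrading the negativity of $\dot{\widetilde V}$ from the finite vertex set to the full continuum of admissible $u_{ij}$.

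This is where I would exploit the structural form of (\ref{eq:uncertain}). Each $u_{ij}$ enters $\hat f_i(\xx,\theta)$ only through the single term $f_{ij}(\xx,\theta)\,u_{ij}\,\xx^{\gamma_{ij}}$, so $\hat{\ff}(\xx,\theta)$ is affine in each $u_{ij}$ separately. Consequently, for fixed $\xx$ and $\theta$, the map
$$
\uu=(u_{ij})\ \longmapsto\ \frac{\partial \widetilde V}{\partial\xx}\cdot\hat{\ff}(\xx,\theta)
$$
is multilinear (degree $\le 1$ in each coordinate) on the box $B:=\prod_{i,j}[-b_{ij},b_{ij}]$. The key elementary fact is that a function affine in each coordinate attains its maximum over $B$ at one of the vertices: freezing all but one coordinate yields an affine univariate function whose maximum on $[-b_{ij},b_{ij}]$ occurs at an endpoint, and iterating over all coordinates forces the maximizer to a vertex of $B$. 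Therefore, for each fixed $\xx\in\Omega_{\widetilde V}\setminus\{\bz\}$ and each $\theta\in\Theta$,
$$
\max_{\uu\in B}\ \frac{\partial \widetilde V}{\partial\xx}\cdot\hat{\ff}(\xx,\theta)
=\max_{u_{ij}\in\{\pm b_{ij}\}}\ \frac{\partial \widetilde V}{\partial\xx}\cdot\hat{\ff}(\xx,\theta)<0,
$$
the last strict inequality being exactly the vertex constraint certified by (\ref{ROA:variable1}) (a maximum over finitely many strictly negative values is strictly negative). This gives $\dot{\widetilde V}<0$ for every admissible uncertainty.

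With all four hypotheses of Theorem~\ref{thm:ROA} now verified for (\ref{eq:uncertain}) uniformly over $\theta\in\Theta$ and $u_{ij}\in[-b_{ij},b_{ij}]$, Theorem~\ref{thm:ROA} immediately gives that $\Omega_{\widetilde V}$ is an invariant subset of the DOA of (\ref{eq:uncertain}). To finish, I would invoke the trajectory inclusion recorded just before the statement: since the true remainder $\phi_{ij}(\xx)-p_{ij}(\xx)$ realizes an admissible value $u_{ij}\,\xx^{\gamma_{ij}}$ with $|u_{ij}|\le b_{ij}$ at each point, every trajectory of (\ref{eq:vectorfield}) is also a trajectory of (\ref{eq:uncertain}); hence the DOA of (\ref{eq:uncertain}) is contained in that of (\ref{eq:vectorfield}), and a set invariant under the larger family of trajectories is a fortiori invariant under the smaller one. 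Thus $\Omega_{\widetilde V}$ is an invariant subset of the DOA of (\ref{eq:vectorfield}) as well.

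I expect the multilinearity/vertex reduction of the second paragraph to be the conceptual heart of the proof: the program only tests the finitely many sign combinations $u_{ij}\in\{\pm b_{ij}\}$, and the entire result hinges on the affine dependence of $\hat{\ff}$ on each uncertainty making these vertices sufficient. The remaining steps are essentially bookkeeping, citing Theorem~\ref{thm:ROA} and the trajectory inclusion already granted in the text.
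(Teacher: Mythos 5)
Your proof is correct and follows the same route as the paper's own proof, which is considerably terser: the paper merely observes that the true derivative along (\ref{eq:vectorfield}) is realized by some $\tilde u_{ij}\in[-b_{ij},b_{ij}]$ at each point, and then asserts that ``clearly'' the constraints of (\ref{ROA:variable1}) imply the conditions of (\ref{ROA:variable}). The step you single out as the conceptual heart --- that $\frac{\partial V}{\partial\xx}\cdot\hat{\ff}(\xx,\theta)$ is affine in each $u_{ij}$ (indeed jointly affine in $\uu$, since each $u_{ij}$ enters exactly one term linearly and no products of distinct uncertainties occur), so that negativity at the $2^{nk}$ vertices of the box forces negativity on the entire box --- is precisely what the paper's ``clearly'' suppresses, and your write-up supplies it explicitly; in that respect your argument is more complete than the published one, while the remaining steps (invoking Theorem~\ref{thm:ROA} and the trajectory inclusion of (\ref{eq:vectorfield}) into (\ref{eq:uncertain})) match the paper's bookkeeping. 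One small caveat: you claim boundedness of $\Omega_{\widetilde V}$ is ``imposed verbatim by the program,'' but (\ref{ROA:variable1}) as printed actually drops the boundedness constraint that appears in (\ref{ROA:variable}); this lacuna is the paper's rather than yours, but strictly speaking boundedness must be reinstated (or otherwise guaranteed) for Theorem~\ref{thm:ROA} to apply.
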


\begin{proof}
By construction of $\hat f_i(\xx,\theta)$'s, we have
\begin{eqnarray*}\small
\begin{split}
  \forall\xx\in\Omega_{V}\backslash\{\bz\},\forall\theta\in\Theta,\,\,\exists
  \tilde{u}_{ij}\in[-b_{ij},b_{ij}]: \dot{V}(\xx)=\frac{\partial
  {V}}{\partial\xx}\cdot \hat{\ff}(\xx,\theta).
\end{split}
\end{eqnarray*}
Clearly, if the constraints in (\ref{ROA:variable1}) are fulfilled,
the conditions in (\ref{ROA:variable}) also hold. Therefore,
$\Omega_{\widetilde{V}}$ is certainly a subset of $\Omega_{V}$ in
(\ref{ROA:variable}).
\end{proof}

Assume that $\Theta$
is a semialgebraic set. For simplicity, we suppose $\Theta=\{\theta\in \Theta: \psi(\theta)\geq 0\}.$
By rewriting the third, fourth and fifth constraints into equivalent empty set conditions, the condition (\ref{ROA:variable1}) is transformed as
\begin{eqnarray}\label{condition:set}
\left\{\begin{array}{l@{}l}
    &  V(\bz)=0,\\
    & \{\xx\in\RR^n: V(\xx)\leq 1, \xx\neq\bz,V(\xx)\leq 0\}=\emptyset,\\
    & \{\xx\in\RR^n: V(\xx)\leq 1, \xx\neq\bz,\psi(\theta)\geq0,\frac{\partial V}{\partial \xx}\cdot\hat{\ff}(\xx)\geq 0\}=\emptyset,\forall u_{ij}\in\{\pm b_{ij}\},\\
    & \{\xx\in\RR^n: g(\xx)\leq\beta, V(\xx)\geq1,V(\xx)\neq1\}=\emptyset.
\end{array}\right.
\end{eqnarray}

As stated in \cite{tan2006stability}, Stengle's Positivstellensatz
\cite{bochnak1998real} be applied directly to solve
(\ref{condition:set}).
However, from the computational point of
view,
it is more efficient to replace
all the
inequations
in (\ref{condition:set}) by
inequalities of the form $f \geq 0$ or $f\leq 0$. This can be done
by introducing  constants $\delta\in\RR_+$ and polynomials of the
form \cite{tan2006stability}
$
     l(\xx)=\Sigma_{i=1}^n\epsilon_{i}x_i^{m},
$ 
where $\epsilon_{i}\in\RR_+$ and $m$ is assumed to be even. For
example, by using $\delta_{1}$ and $l_{1}(\xx)$, the second
condition in (\ref{condition:set}) can be relaxed as
$$\{\xx\in\RR^n: V(\xx)-1\leq 0, 
V(\xx)-l_1(\xx)+\delta_1\leq 0\}=\emptyset.$$
Therefore, the problem (\ref{condition:set}) can be transformed into the following
feasibility problem:
\begin{eqnarray*}
\left.\begin{array}{l@{}l}
    & \displaystyle \max_{V, \beta}\  \beta   \\
    & \text{s.t.} \, V(\bz)=0,\\
    & \,\,\,\{\xx\in\RR^n: V(\xx)-1\leq 0, V(\xx)-l_1(\xx)+\delta_1\leq 0\}=\emptyset,\\
    & \,\,\,\{\xx\in\RR^n: V(\xx)-1 \leq 0, -\frac{\partial V}{\partial \xx}\cdot\hat{\ff}(\xx,\theta)-l_2(\xx)+\delta_2\leq 0, \psi(\theta)\geq0\}=\emptyset,\,\,\forall u_{ij}\in\{\pm b_{ij}\}\\
    & \,\,\,\{\xx\in\RR^n: g(\xx)-\beta \leq 0, V(\xx)-
    1- \delta_3\geq 0\}=\emptyset.
\end{array}\right\}
\end{eqnarray*}

Suppose that $\Sigma[\xx]$ is the set of sum of squares (SOS)
polynomials in $\RR[\xx]$.
Since the constraints in the above problem 
involve
no equations and inequations, only a special case of Stengle's
Positivstellensatz
is needed,
as shown in the following corollary.
\begin{corollary}\label{cor:Positivstellensatz}
Let $F=\{f_i\}_{i=1,\dots,r}$ be a set of polynomials in $\RR[\xx]$.
The semi-algebraic set
$$
  \{\xx\in\RR^n: f_i(\xx)\geq0,i=1,\dots,r\}
$$
is empty if and only if there exist polynomials $s_0, s_1,\dots,
s_l\in \Sigma[\xx]$ such that
$$s_0+\sum_{j=1}^ls_jb_j=0$$
where $b_j\in\left\{f_1^{t_1}\cdots f_r^{t_r}: t_i\in \ZZ_{\geq 0}
\right\}$.
%
\end{corollary}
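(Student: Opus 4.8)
The plan is to obtain this as the specialization of the general Stengle Positivstellensatz (\cite{bochnak1998real}, the result already invoked just above) to the degenerate situation in which there are neither strict inequations nor equations. Recall the general statement: given three finite families of polynomials describing inequalities $f_i\geq 0$, inequations $g_k\neq 0$, and equations $h_l=0$, the associated semi-algebraic set is empty if and only if there exist $p$ in the cone generated by the $f_i$, $q$ in the multiplicative monoid generated by the $g_k$, and $h$ in the ideal generated by the $h_l$ with $p+q^2+h=0$. Here the cone is the smallest subset of $\RR[\xx]$ that contains every square and every $f_i$ and is closed under addition and multiplication.

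First I would carry out the reduction of the two extra data. Since $F$ involves no inequations, the multiplicative monoid is generated by the empty family and hence equals $\{1\}$, so the only admissible choice is $q=1$ and $q^2=1$. Likewise, with no equations present the relevant ideal is $\{0\}$, forcing $h=0$. The general certificate $p+q^2+h=0$ thus collapses to $p+1=0$ with $p$ ranging over the cone generated by $f_1,\dots,f_r$. Next I would make the cone explicit: every generator and every square has the shape $s\,b$ with $s\in\Sigma[\xx]$ and $b$ a product $f_1^{t_1}\cdots f_r^{t_r}$, and this shape is preserved under sums and, via distributivity together with $f_i\cdot f_1^{t_1}\cdots f_r^{t_r}=f_1^{t_1}\cdots f_i^{t_i+1}\cdots f_r^{t_r}$, under products; hence the cone is exactly the set of finite sums $\sum_j s_j b_j$ with $s_j\in\Sigma[\xx]$ and $b_j\in\{f_1^{t_1}\cdots f_r^{t_r}:t_i\in\ZZ_{\geq 0}\}$. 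Substituting this form into $p+1=0$ and absorbing the constant $1=1^2\in\Sigma[\xx]$ into the degree-zero term yields precisely the identity $s_0+\sum_{j=1}^l s_j b_j=0$ asserted in the corollary, establishing the ``only if'' direction.

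For the converse (soundness of the certificate) I would argue directly: suppose such $s_0,\dots,s_l$ and $b_j$ exist and, toward a contradiction, that some $\xx$ satisfies all $f_i(\xx)\geq 0$. Then each $b_j(\xx)$ is a product of nonnegative reals, so $b_j(\xx)\geq 0$, while each $s_j(\xx)\geq 0$ because $s_j$ is a sum of squares; consequently the left-hand side of the identity is nonnegative at $\xx$. The only genuinely delicate point—and what I expect to be the crux—is that this nonnegativity must be made \emph{strict} in order to contradict the vanishing of the sum: it is exactly the constant $1$ produced by $q^2$ in the reduction that guarantees $s_0\geq 1$, whence the sum is strictly positive at $\xx$, a contradiction. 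Hence no such $\xx$ exists and the semi-algebraic set is empty. I expect the explicit description of the cone and this constant-term bookkeeping to be the main obstacle; by contrast, the two reductions of the monoid to $\{1\}$ and of the ideal to $\{0\}$ are routine.
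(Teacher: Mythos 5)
Your forward direction is fine and matches the paper's (implicit) argument: the paper offers no proof at all, merely asserting the corollary as the specialization of Stengle's Positivstellensatz to the case of no equations and no inequations, and your reduction --- the multiplicative monoid generated by the empty family collapses to $\{1\}$, the ideal to $\{0\}$, and the cone is exactly the set of finite sums $\sum_j s_j b_j$ with $s_j\in\Sigma[\xx]$ and $b_j$ a product of the $f_i$'s --- is precisely that specialization, carried out correctly.

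The genuine gap is in your converse. You correctly isolate the crux (the sum must be \emph{strictly} positive at a point of the set), but you discharge it by claiming that the constant $1$ produced by $q^2$ ``in the reduction'' guarantees $s_0\geq 1$. In the converse direction there is no reduction to appeal to: the hypothesis is only that \emph{some} $s_0,\dots,s_l\in\Sigma[\xx]$ satisfy $s_0+\sum_{j=1}^l s_jb_j=0$, and nothing in that hypothesis forces $s_0\geq 1$, or even $s_0\neq 0$. In fact, the ``if'' direction is false as literally stated: the trivial choice $s_0=s_1=\cdots=s_l=0$ is always an admissible certificate, for any family $F$, including families whose positivity set is nonempty (take $r=1$, $f_1=x_1$). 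This defect is inherited from the paper's own formulation: the correct special case of Stengle's theorem keeps the contribution of $q^2=1$ explicit, namely the set is empty if and only if $1+\sum_{j} s_j b_j=0$ for some $s_j\in\Sigma[\xx]$ and products $b_j$ (equivalently, $-1$ lies in the cone generated by $f_1,\dots,f_r$; equivalently, $s_0$ is required to have the form $1+\sigma$ with $\sigma\in\Sigma[\xx]$). Under that corrected statement your evaluation argument closes immediately: at any $\xx$ with all $f_i(\xx)\geq0$, every term of the certificate is nonnegative and the explicit constant bounds the sum below by $1$, a contradiction. So rather than importing the constant from the forward reduction --- which is circular --- you should either prove the corrected statement or flag that the corollary as printed must retain the constant term for the equivalence to hold.
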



Applying Corollary \ref{cor:Positivstellensatz}, and removing all the crossing products of the involved
inequalities, we obtain the following
relaxed
problem:
\begin{eqnarray}\label{SOS:variable1}
\left.\begin{array}{l@{}l}
    & \displaystyle  \max_{V,\beta}\  \beta   \\
    &\text{s.t.} \, V(\bz)=0,\\
    &\quad
    \sigma_0(\xx)+\sigma_1(\xx)(1-V(\xx))+\sigma_2(\xx)(-V(\xx)+l_1(\xx)-\delta_1)=0,\\
    &\quad \lambda_0(\xx)+\lambda_1(\xx)(1-V(\xx))+\lambda_2(\xx)(\frac{\partial V}{\partial\xx}\cdot\hat{\ff,\theta}(\xx)+l_2(\xx)-\delta_2)=0,\\
    &\quad
    \rho_0(\xx)+\rho_1(\xx)(\beta-g(\xx))+\rho_2(\xx)(V(\xx)-1-\delta_3)=0,
\end{array}\right\}
\end{eqnarray}
with $\sigma_\iota(\xx),\lambda_\iota(\xx), \rho_\iota(\xx)\in
\Sigma[\xx], \iota=0, \dots, 2$ and for any $u_{ij}\in\{\pm b_{ij}\}$.

Suppose that the Lyapunov function $V(\xx)$ to be computed is a polynomial of degree $d$ and has the form
$
V(\xx)=\sum_{\alpha} c_{\alpha}\xx^{\alpha}$ where~$c_{\alpha} \in \RR$, $\xx^{\alpha}=x_{1}^{\alpha_{1}}\cdots x_{n}^{\alpha_{n}}$ and
$\alpha=(\alpha_{1},\ldots,\alpha_{n}) \in {\ZZ}_{\geq 0}^{n}$ with
$\sum_{i=1}^{n} \alpha_{i} \leq d$. The decision variables of the problem (\ref{SOS:variable1}) are
$\beta$ and the coefficients of all the unknown polynomials
occurred in (\ref{SOS:variable1}), such as $V(\xx),
\sigma_\iota(\xx),$ $\lambda_\iota(\xx)$ and $\rho_\iota(\xx)$.
Clearly, some nonlinear terms which are products of the undetermined
coefficients will occur in (\ref{SOS:variable1}),
which yields a non-convex bilinear matrix inequalities (BMI)
problem.
To solve BMI problems, either a Matlab package PENBMI
solver \cite{KS2005}, which combines the (exterior) penalty and
(interior) barrier method with the augmented Lagrangian method, can
be applied directly,
or an iterative method can be applied by fixing $\beta$
and the polynomials alternatively, which leads to a sequential
convex LMI problem. The reader can refer to  \cite{yang2012exact} for more details.


Remark that, the above proposed method is also applicable to computing the largest possible
estimate of the DOA
for a non-polynomial system (\ref{EQ:nonlinear}) at $\bz$ through a fixed Lyapunov function $V(\xx)$.
Let $\Omega_{c}:=\{\xx\in\RR^n: V(\xx)\leq c\}.$ We will compute
$$
\Omega_{c^*}=\{\xx\in\RR^n: V(\xx)\leq c^*\},
$$
where
$c^*=\sup\{c\in \RR: \dot{V}(\xx)<0, \,\,\forall\xx\in\Omega_c
\setminus\{\bz\}\}.
$
Due to the existence of non-polynomial terms, $c^*$ cannot be computed explicitly. Instead, we will compute
the lower and upper bounds of $c^*$ as follows.
Replacing the involved non-polynomial $\phi_{ij}$ by uncertain polynomials,
computing the lower bound $c_d$ of $c^*$ can be relaxed as the
following problem:
\begin{eqnarray}\label{ROA:fixed1}
\left.\begin{array}{l@{}l}
   &\displaystyle  c_d=\max_{c,\xx} \,\,  c    \\
    &\quad\text{s.t.} \,\, \frac{\partial V}{\partial
    \xx}\cdot \hat{\ff}(\xx)<0,\,\, \forall
    \xx\in\Omega_c\backslash\{\bz\},\,\,\forall u_{ij}\in\{\pm b_{ij}\},
\end{array}\right.
\end{eqnarray}
where $d$ is the degree of the interpolation polynomials.
Clearly,
$c_d$ will converge to $c^*$ when $d$ tends to $\infty$.
Next, we will search for a tight upper bound $\upsilon_d$ of $c^*$.
To achieve this, let us look for $\upsilon_d$ such that, for each
$u_{ij}\in\{\pm b_{ij}\}$, the constant semi-algebraic system
\begin{equation}\label{upper_bound}
    V(\xx)-\upsilon_d\leq 0 \wedge \xx\neq\bz\wedge\frac{\partial V}{\partial\xx}\cdot\hat{\ff}(\xx)\geq0
\end{equation}
has real solutions, which implies that $\Omega_{\upsilon_d}$ is not an estimate of the DOA.
%
%
Based on bisection, $\upsilon_d$ can be computed by Maple packages {\it RegularChains}, {\it
DISCOVERER} \cite{xia07} and {\it RAGLib}~\cite{mohab03}.

\section{Experiments}\label{sec:exp}
Let us present some examples of DOA analysis of non-polynomial 
systems.
\begin{example}\cite[Example 1]{chesi2009estimating} %
Consider a non-polynomial 
system
\begin{eqnarray*}
\left\{\begin{array}{l@{}l}
   \dot{x}_1=-x_1+x_2+0.5(e^{x_1}-1), \\
   \dot{x}_2=-x_1-x_2+x_1x_2+x_1\cos x_1.
   \end{array}\right.
\end{eqnarray*}
To estimate the DOA of this system,
we need to approximate the occurred non-polynomial terms $e^{x_1}$ and
$\cos x_1$ by uncertain polynomials.
Based on the
technique in Section \ref{sec:approx}, we obtain
\begin{eqnarray*}
\left\{\begin{array}{l@{}l}
    &e^{x_1}=1+(1.0000004+u_1)x_1+\cdots+0.0014482244x_1^6,\\
    &\cos x_1=1-(0.5+u_2)x_1^2+0.041669352x_1^4-0.0013878601x_1^6,\\
    &-0.6\leq x_1\leq 0.6,\\
    &-3\times10^{-6}\leq u_1\leq 3\times10^{-6},\\
    &-1.2\times10^{-6}\leq u_2\leq 1.2\times10^{-6},
\end{array}\right.
\end{eqnarray*}
and the associated uncertain polynomial system.

We first consider a fixed Lyapunov function $V(x_1,x_2)=x_1^2+x_2^2$.
For the given degree $6$ of interpolation polynomials, after solving the corresponding SOS programming (\ref{ROA:fixed1}),
we obtain
the lower bound $c_6=0.321064$ of $c^*$,
 which is an improvement over the results in
\cite{chesi2009estimating} with the lower bound 0.3210. Furthermore,
by solving the problem (\ref{upper_bound}) we obtain a tight
upper bound $\upsilon_6=0.3216$ of $c^*$.
\begin{figure}\label{fig_ex4}
\centering \subfigure[boundaries of $\Omega_{c_6}$ with
$c_6=0.321064$ and $\dot{V}=0$] {
      \includegraphics[width=0.3\textwidth,angle=270]{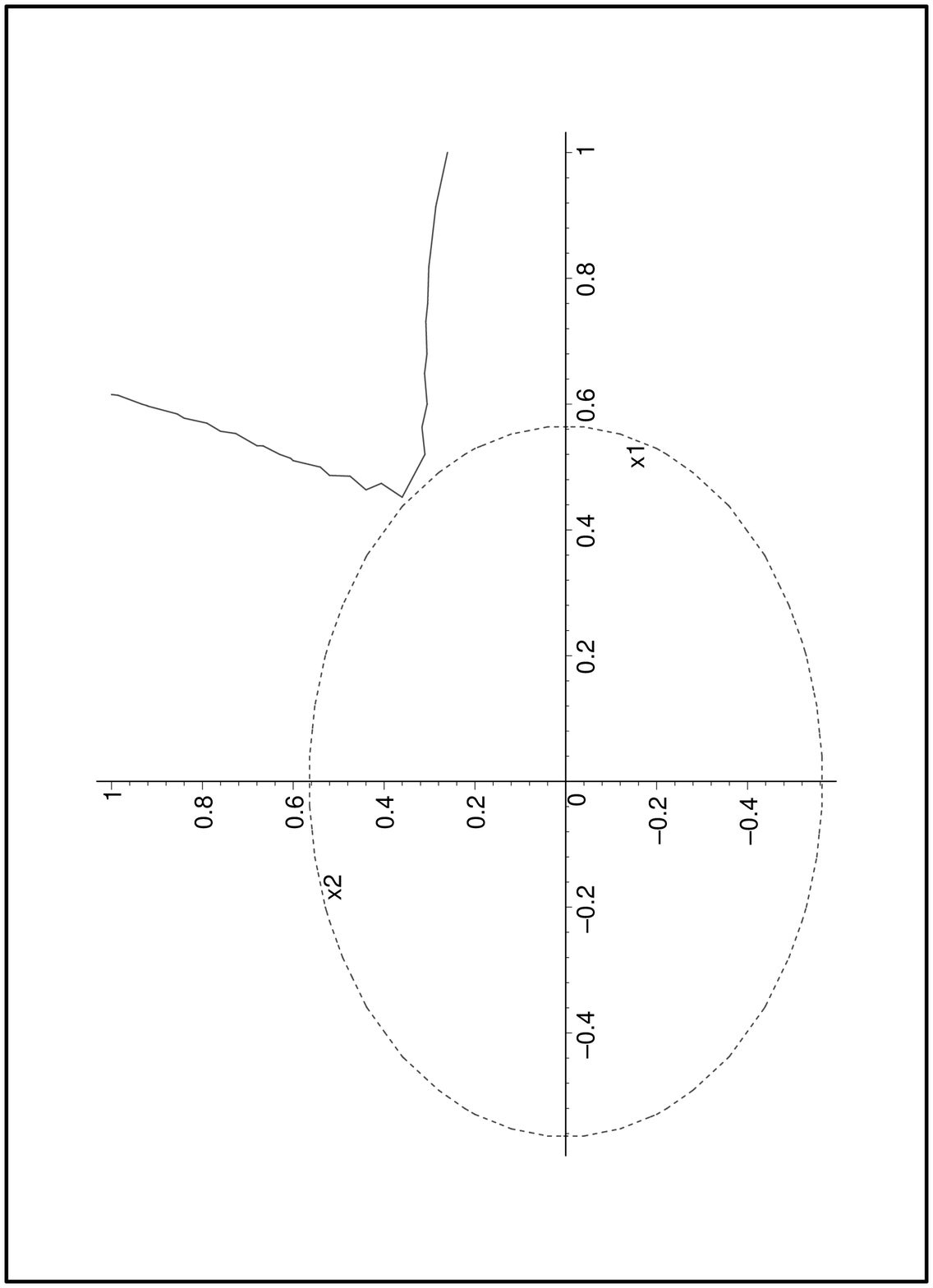}
      \label{fig_firstsub}
} \subfigure[boundaries of the estimates for $\deg{V=2}$ and
$\deg{V=4}$] {
      \includegraphics[width=0.3\textwidth,angle=270]{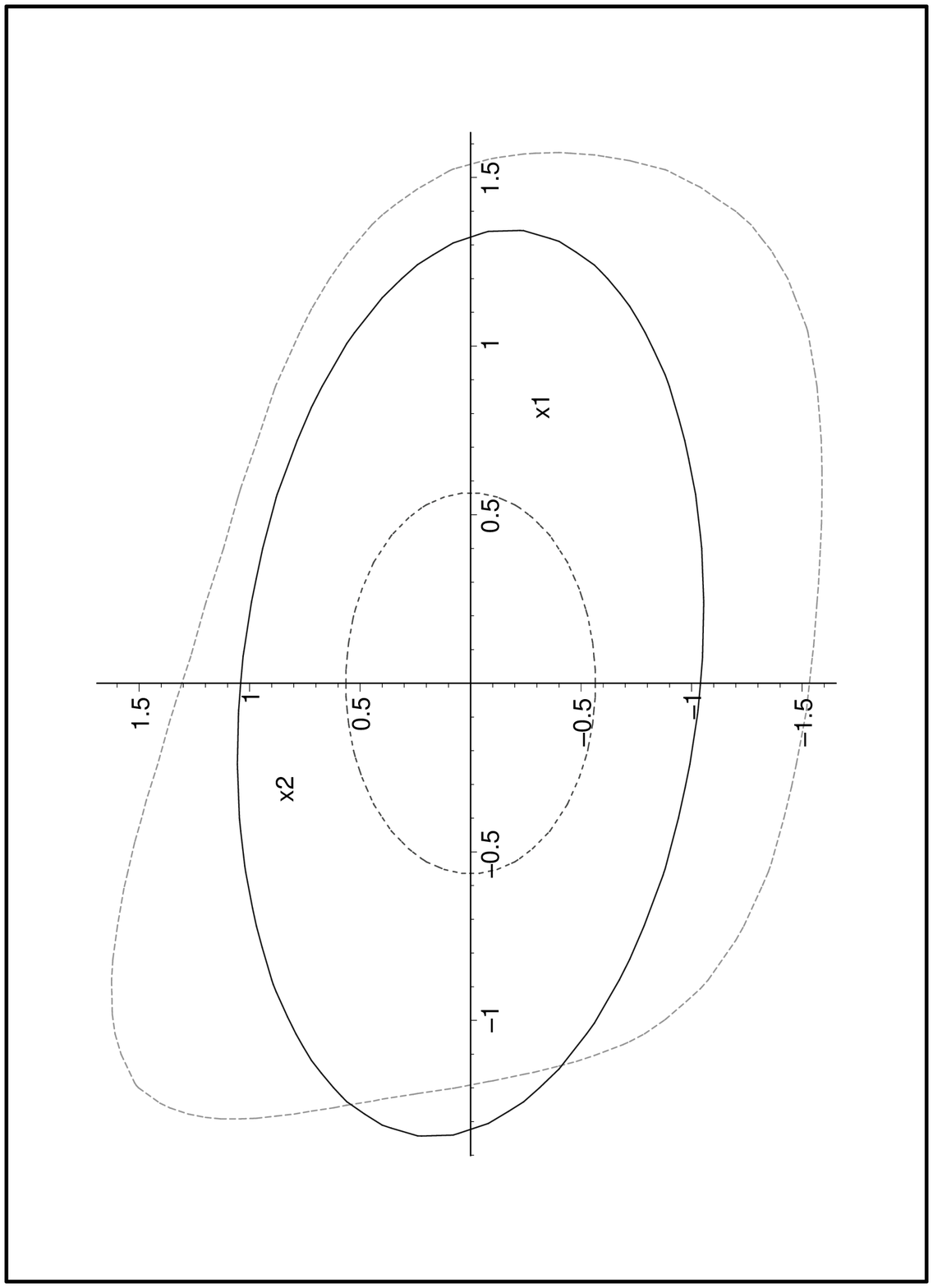}
      \label{fig_secondsub}
}  \caption{Results of Example 4.} \label{fig_subfigures}
\end{figure}

Next, we  estimate the DOA with variable
Lyapunov functions.
Suppose $g(x_1,x_2)=x_1^2+x_2^2$. For $\deg V=2$, solving
the SOS programming (\ref{SOS:variable1}) with BMI constraints yields
\begin{eqnarray*}
\begin{split}
    &{V}(x_1,x_2)=0.56678683x_1^2+0.23598133x_1x_2+0.92086339x_2^2,\\
    &{\beta}=1.0453916,
\end{split}
\end{eqnarray*}
which is an improvement over the result from
\cite{chesi2009estimating} where ${\beta}=1.0404$. Similarly, for
$\deg V=4$ solving the SOS programming
(\ref{SOS:variable1}) with BMI constraints yields
${\beta}=1.4001306$ and
\begin{equation*}
    {V}(x_1,x_2)=0.068693712x_1^2+\cdots+0.27723966x_1^4+0.2167918x_2^4,
\end{equation*}
which is an improvement over the result from
\cite{chesi2009estimating} where ${\beta}=1.2769$. Therefore,
$\Omega_{V}$ is an estimate of the DOA of the given system. Figure 2
shows the results obtained with Lyapunov functions of degrees 2 and
4. $\hfill \Box$
\end{example}

\begin{example}\cite[Example 2]{chesi2009estimating}%
Consider a non-polynomial 
system
\begin{eqnarray*}
\left\{\begin{array}{l@{}l}
   \dot{x}_1=x_2, \\
   \dot{x}_2=-0.2x_2+0.81\sin x_1 \cos x_1-\sin x_1.
   \end{array}\right.
\end{eqnarray*}
Using the
technique in Section \ref{sec:approx}, we obtain the approximations of the non-polynomial
terms $\sin x_1$ and $\cos x_1$ as follows
\begin{eqnarray*}
\left\{\begin{array}{l@{}l}
    &\sin 2x_1=(2+u_1)x_1-1.3333091x_1^3+0.26625372x_1^5-0.023889715x_1^7,\\
    &\sin x_1=(1+u_2)x_1-0.16666643x_1^3+0.008331760x_1^5-0.00019471928x_1^7,\\
    &-0.84\leq x_1\leq 0.84,\\
    &-5.4\cdot10^{-5}\leq u_1\leq 5.4\cdot10^{-5},\\
    &-1\cdot10^{-7}\leq u_2\leq
    1\cdot10^{-7},
\end{array}\right.
\end{eqnarray*}
and the associated uncertain polynomial system.

We first fix the Lyapunov function $V(x_1,x_2)=x_1^2+x_1x_2+4x_2^2$.
Let the degree of interpolation polynomials be 7. Solving the corresponding SOS programming (\ref{ROA:fixed1}), we
obtain the results for the lower bound $c_7=0.69922$ of $c^*$,
which is an improvement over the result from
\cite{chesi2009estimating} where the lower bound was 0.6990. Furthermore, by solving the problem (\ref{upper_bound}) we
obtain a tighter upper bound $\upsilon_7=0.6998$ of $c^*$.

We then estimate the DOA with variable
Lyapunov functions. Suppose $g(x_1,x_2)=x_1^2+x_2^2$. When
$\deg V=2$, by solving the SOS programming
(\ref{SOS:variable1}) with BMI constraints, we obtain
\begin{eqnarray*}
\begin{split}
    &{V}(x_1,x_2)=1.01636667x_1^2+0.84993333x_1x_2+3.40233333x_2^2,\\
    &{\beta}=0.287706,
\end{split}
\end{eqnarray*}
which is an improvement over the result from
\cite{chesi2009estimating} where ${\beta}=0.2809$. Similarly, when
$\deg V=4$, by solving the SOS programming (\ref{SOS:variable1})
with BMI constraints, we obtain ${\beta}=1.92156$ and
\begin{equation*}
    {V}(x_1,x_2)= 0.053849691x_1^2+\cdots+0.11144243x_1^4+0.058855229x_2^4,
\end{equation*}
which is an improvement over the result from
\cite{chesi2009estimating} where ${\beta}=1.1236$. Therefore,
$\Omega_{\widetilde{V}}$ is an estimate of the DOA of the given
system. Figure 3 shows the results obtained with Lyapunov functions
of degrees 2 and 4.
\begin{figure}\label{fig_ex5}
\centering \subfigure[boundaries of $\Omega_{c_7}$ with
$c_7=0.69922$ and $\dot{V}=0$] {
      \includegraphics[width=0.3\textwidth,angle=270]{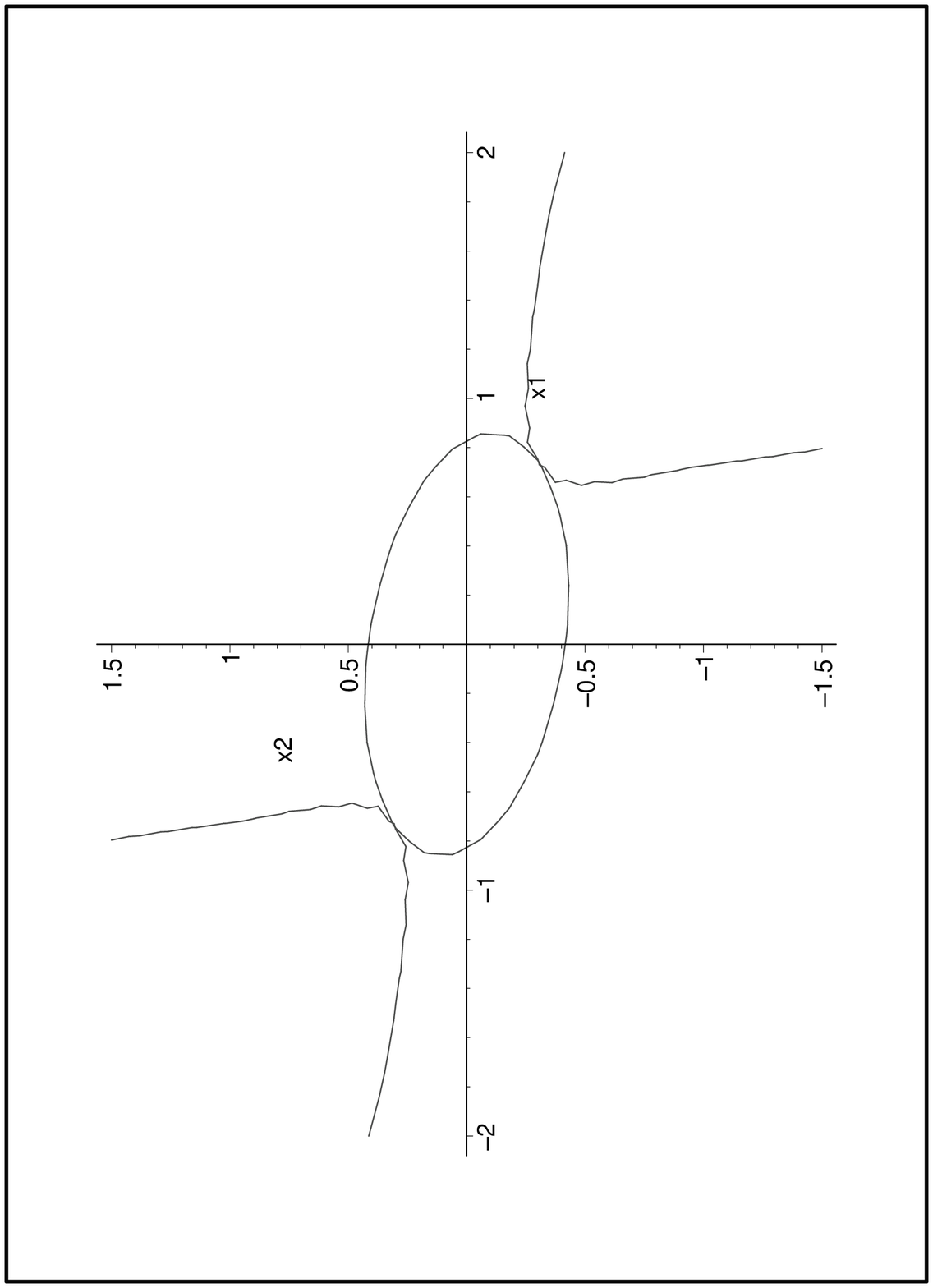}
      \label{fig_firstsub}
} \subfigure[boundaries of the estimates for $\deg{V=2}$ and
$\deg{V=4}$] {
      \includegraphics[width=0.3\textwidth,angle=270]{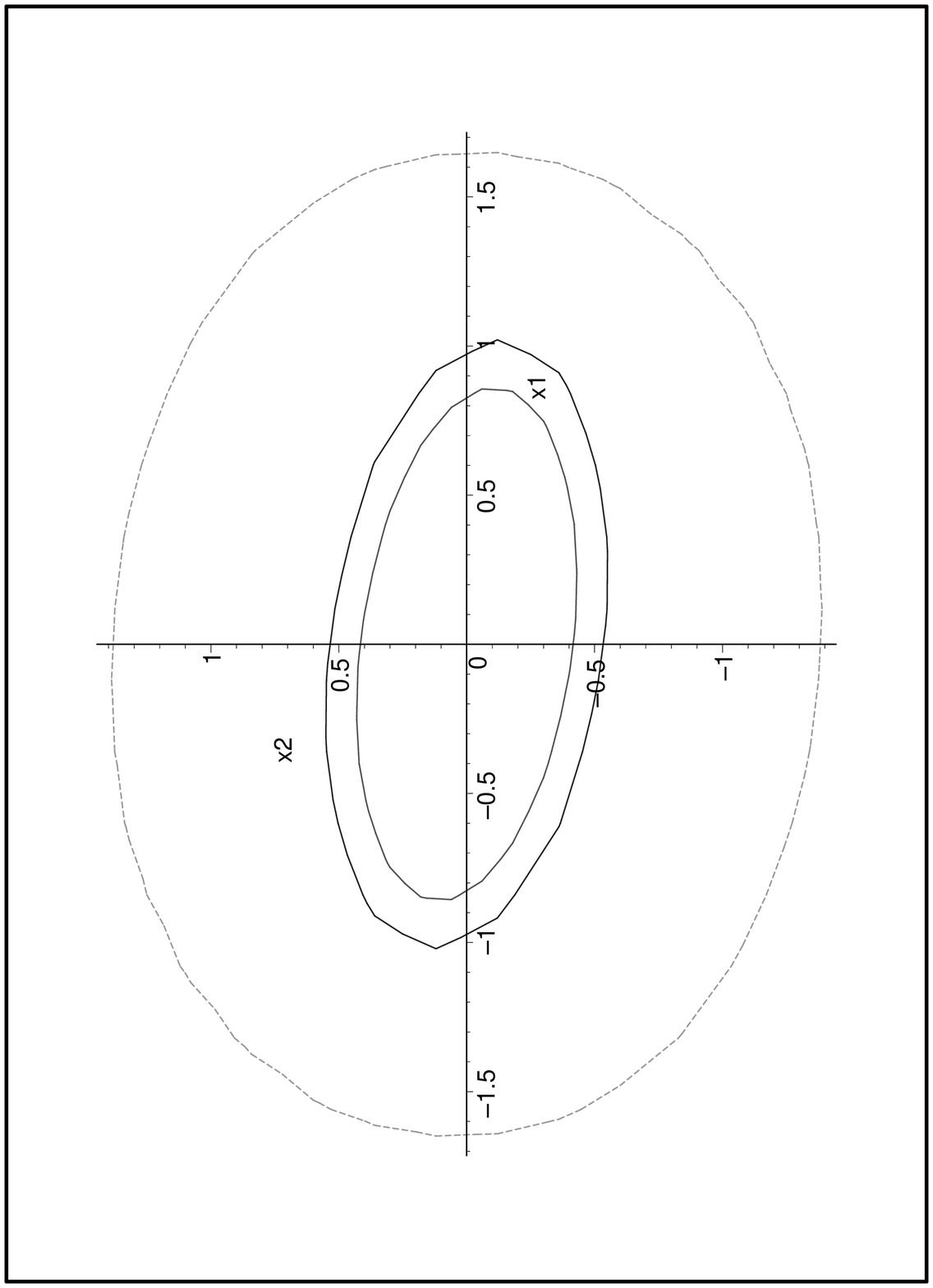}
      \label{fig_secondsub}
}  \caption{Results of Example 5.} \label{fig_subfigures}
\end{figure}
$\hfill \Box$
\end{example}

\begin{example}\cite[Example 2.2]{khalil2002nonlinear}
Consider an uncertain non-polynomial 
system
\begin{eqnarray*}
\left\{\begin{array}{l@{}l}
   \dot{x}_1=x_2, \\
   \dot{x}_2=-\theta x_2-10\sin x_1,
   \end{array}\right.
\end{eqnarray*}
for $0.2\leq \theta\leq 1$.
Based on the technique in Section \ref{sec:approx}, we obtain an approximation of the non-polynomial
term $\sin x_1$ as follows
\begin{eqnarray*}
\left\{\begin{array}{l@{}l}
    &\sin x_1=(1+u_1)x_1-0.1666426901x_1^3+0.008282118073x_1^5-0.0001751721223x_1^7,\\
    &-2.4\leq x_1\leq 2.4,\\
    &-4.365606\times10^{-4}\leq u_1\leq
    4.365606\times10^{-4},
\end{array}\right.
\end{eqnarray*}
and the associated uncertain polynomial system.

Suppose $g(x_1,x_2)=x_1^2+x_2^2$. For $\deg V=4$, solving the
SOS programming (\ref{SOS:variable1}) with BMI constraints yields
${\beta}=0.66552836$ and
\begin{equation*}
    {V}(x_1,x_2)= 1.1629845x_1^{2}+ \cdots+ 0.51014802x_1^{4}+ 0.010528x_2^{4}.
\end{equation*}
Then $\Omega_{{V}}$ is an estimate of the DOA of the given
system. $\hfill \Box$
\end{example}

\section{Conclusion}\label{sect:conclusion}

In this paper, we present a method on stability region analysis of
non-polynomial systems via Lyapunov functions. A
polynomial approximation technique, based on multivariate polynomial
interpolation and error analysis, is applied to compute an uncertain
polynomial system, whose set of trajectories contains that of the original non-polynomial
system. To estimate DOA of the uncertain polynomial system, we apply
Positivstellensatz to transform polynomial optimization problem
into the corresponding (bilinear) sum of
squares programming, which can be solved using the PENBMI solver
or iterative method. Experiments on the benchmark non-polynomial
systems 
show that our approach provides better estimates.

\end{document}